\author{\textbf{David Bremner and Rasoul Shahsavarifar} \vspace{.5cm}
\\Faculty of Computer Science\\University of New Brunswick\\ Fredericton, NB, Canada\\Email:\{bremner, ra.shahsavari\} @unb.ca}
\title{\textbf{An Optimal Algorithm for Computing the Spherical Depth of Points in the Plane}}
\date{\today}
\def\BState{\State\hskip-\ALG@thistlm}
\def\BState{\State\hskip-\ALG@thistlm}
\theoremstyle{definition}
\newtheorem{theorem}{Theorem}[section]
\newtheorem{lemma}[theorem]{Lemma}
\DeclareMathOperator{\SD}{\textit{SD}}
\DeclareMathOperator{\SphD}{\textit{SphD}}
\DeclareMathOperator{\Sph}{\textit{Sph}}
\DeclareMathOperator{\Bin}{\textit{Bin}}
\DeclareMathOperator{\Sin}{\textit{Sin}}
\begin{document}
\maketitle
\vspace{1.5cm}
\paragraph{Abstract.} \label{sec:abst}
For a distribution function $F$ on $\mathbb{R}^d$ and a point $q\in \mathbb{R}^d$, the \emph{spherical depth} $\SphD(q;F)$ is defined to be the probability that a point $q$ is contained inside a random closed hyper-ball obtained from a pair of points from $F$. The spherical depth $\SphD(q;S)$ is also defined for an arbitrary data set $S\subseteq \mathbb{R}^d$ and $q\in \mathbb{R}^d$. This definition is based on counting all of the closed hyper-balls, obtained from pairs of points in $S$, that contain $q$. The significant advantage of using the spherical depth in multivariate data analysis is related to its complexity of computation. Unlike most other data depths, the time complexity of the spherical depth grows linearly rather than exponentially in the dimension $d$. The straightforward algorithm for computing the spherical depth in dimension $d$ takes $O(dn^2)$. The main result of this paper is an optimal algorithm that we present for computing the bivariate spherical depth. The algorithm takes $O(n \log n)$ time. By reducing the problem of \textit{Element Uniqueness}, we prove that computing the spherical depth requires $\Omega(n \log n)$ time. Some geometric properties of spherical depth are also investigated in this paper. These properties indicate that \emph{simplicial depth} ($\SD$) (Liu, 1990) is linearly bounded by spherical depth (in particular, $\SphD\geq \frac{2}{3}SD$). To illustrate this relationship between the spherical depth and the simplicial depth, some experimental results are provided. The obtained experimental bound ($\SphD\geq 2\SD$) indicates that, perhaps, a stronger theoretical bound can be achieved.

\section{Introduction} \label{sec:intro}
The rank statistic tests play an important role in univariate non-parametric statistics. If one attempts to generalize the rank tests to the multivariate case, the problem of defining a multivariate order statistic will occur. It is not clear how to define a multivariate order or rank statistic in a meaningful way. One approach to overcome this problem is to use the notion of data depth. Data depth measures the centrality of a point in a given data set in non-parametric multivariate data analysis. In other words, it indicates how deep a point is located with respect to the data set.
\\\\Over the last decades, various notions of data depth such as \emph{halfspace depth} (Hotelling, 1929,  \cite{hotelling1990stability,small1990survey}; Tukey, 1975, \cite{tukey1975mathematics}), \emph{simplicial depth} (Liu, 1990, \cite{liu1990notion}) \emph{Oja depth} (Oja, 1983, \cite{oja1983descriptive}), and others have emerged as powerful tools for non-parametric multivariate data analysis. Most of them have been defined to solve specific problems in data analysis. They are different in application, definition, and geometry of their central regions (regions with the maximum depth).
\\\\In 2006, Elmore, Hettmansperger, and Xuan \cite{elmore2006spherical} defined another notion of data depth named \emph{spherical depth}. It is defined as the probability that point $q$ is contained in a closed random hyper-ball with the diameter $\overline{x_ix_j}$, where $x_i$ and $x_j$ are two random points from a common distribution function $F$. These closed hyper-balls are known as influence regions of the spherical depth function. The concept of sphere area is the multidimensional generalization of \emph{Gabriel circles} in the definition of the \emph{Gabriel Graph} \cite{liu2011lens}. Spherical depth has some nice properties including affine invariance, symmetry, maximality at the centre and monotonicity. All of these properties are explored in \cite{elmore2006spherical} and \cite{yang2014depth}.
\\\\A notable characteristic of the spherical depth is that its time complexity grows linearly in dimension $d$ while for most other data depths the time complexity grows exponentially. To the best of our knowledge, the current best algorithm for computing the spherical depth is the straightforward algorithm which takes $O(n^2)$.
\\\\In this paper, we present an $O(n\log n)$ algorithm for computing the spherical depth in $\mathbb{R}^2$. Furthermore, we reduce the problem of Element Uniqueness\footnote{Element Uniqueness problem: Given a set $A=\{a_1, a_2,...,a_n\}$, is there a pair of indices $i,j$ with $i \neq j$ such that $a_i = a_j$?} to prove that computing the spherical depth of a query point requires $\Omega (n\log n)$ time. We also  investigate some geometric properties of spherical depth. These properties lead us to bound the simplicial depth of a point in terms of the spherical depth. Finally, some experiments are provided to illustrate the relationship between spherical depth and simplicial depth.

\section{Spherical depth}

\paragraph{Definition:} The spherical influence region, also called sphere area, of $x_i$ and $x_j$ in $\mathbb{R}^d$ ($Sph(x_i, x_j)$) is a closed hyperball with the diameter $\overline{x_ix_j}$. In other words,
\[ \forall (i,j) : Sph(x_i,x_j)=\left\{t\mid d(t,\frac{x_i+x_j}{2}) \leq \frac{d(x_i , x_j)}{2}\right\},
\]
where $d(.,.)$ is the Euclidean distance. Figure \ref{fig:sphere} shows the sphere area formed by two points $x_i$ and $x_j$, $Sph(x_i, x_j)$, in $\mathbb{R}^2$.

\paragraph{Definition:}
For a distribution function $F$ on $\mathbb{R}^d$, the spherical depth function of $q \in \mathbb{R}^d$ is defined as the probability that $q$ is contained within the sphere area $\Sph(x_i, x_j)$ of two random vectors $x_i$ and $x_j$ from $F$. This definition can be represented by (\ref{eq:sph-distribution}).

\begin{equation}
\label{eq:sph-distribution}
\SphD(q;F)= P(q \in Sph(x_i, x_j))
\end{equation}

\paragraph{Definition:}
Let $S=\{x_1,...,x_n\}$ be a set of points in $\mathbb{R}^d$. $\SphD(q;S)$, the spherical depth of a point $q \in \mathbb{R}^d$ with respect to $S$, is defined as a proportion of the sphere areas of $\Sph(x_i, x_j), 1\leq i < j \leq n$ that contain $q$. Using the indicator function $\textbf{\textit{I}}$, this definition can be represented by (\ref{eq:spherical}).

\begin{equation}
\label{eq:spherical}
\SphD(q;S)= \frac{1}{{n \choose 2}}\sum_{1\leq i<j\leq n}^{n} {I(q \in \Sph(x_i, x_j)})
\end{equation}

Figure \ref{fig:3points1-sph} shows the spherical depth of an arbitrary point $q \in \mathbb{R}^2$  with respect to $S=\{p_1, p_2, p_3 \}$.

\begin{figure}[!ht]
\centering
\parbox{5cm}{
\includegraphics[width=5cm]{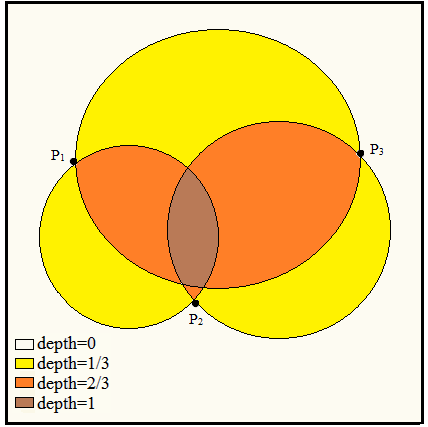}
\caption{Spherical depth of the points in the plane}
\label{fig:3points1-sph}}
\qquad
\begin{minipage}{5cm}
\includegraphics[width=6.5cm]{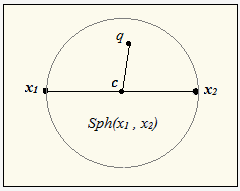}
\caption{\emph{Sphere area defined by $x_i$ and $x_j$}}
\label{fig:sphere}
\end{minipage}
\end{figure}

\subsection{Algorithm for Computing the Spherical Depth of a Query Point}
The current best algorithm for computing the spherical depth of a point $q\in \mathbb{R}^d$ with respect to a data set $S=\{x_1, x_2,...,x_n\}\subseteq \mathbb{R}^d$ is the brute force algorithm. This naive algorithm needs to check all of the $n \choose 2$ sphere areas obtained from the data points to figure out how many of them contain $q$. It can be verified that $q$ is contained in the sphere area $\Sph(x_i, x_j)$ if and only if $d(x_i,x_j) \geq 2 d(q,c)$, where $x_i$ and $x_j$ are two arbitrary points from $S$, and $c=(x_i+x_j)/2$ is the center of $\Sph(x_i, x_j)$. See Figure \ref{fig:sphere} for an illustration in $\mathbb{R}^2$. Checking all of the sphere areas causes the naive algorithm to take $\Theta(dn^2)$. Instead of counting, we focus on the geometric aspects of the sphere areas. These geometric properties lead us to develop an $O(n\log n)$ algorithm (Algorithm \ref{Alg:sph- pseudocode}) for the computation of the spherical depth of $q$.

\begin{lemma}
For arbitrary points $a$, $b$, and $t$ in $\mathbb{R}^2$, $t \in \Sph(a,b)$ if and only if $\angle atb \geq \frac{\pi}{2}$.
\label{lm:point-circle}
\end{lemma}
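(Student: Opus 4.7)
The plan is to translate the geometric statement into an algebraic one using vectors based at $t$ and then recognize a dot product. Writing $c=(a+b)/2$, by definition $t\in\Sph(a,b)$ exactly when $d(t,c)\le d(a,b)/2$, so squaring both sides the membership is equivalent to $4\,d(t,c)^{2}\le d(a,b)^{2}$.

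Next I would introduce the vectors $u=a-t$ and $v=b-t$. Then $2(t-c)=-(u+v)$ and $a-b=u-v$, so the inequality above becomes
\begin{equation*}
|u+v|^{2}\le|u-v|^{2}.
\end{equation*}
Expanding both sides and cancelling $|u|^{2}+|v|^{2}$ reduces the condition to $u\cdot v\le 0$. Assuming $t\ne a$ and $t\ne b$, we can write $u\cdot v=|u|\,|v|\cos(\angle atb)$, where $\angle atb$ is by definition the angle at $t$ between the rays $ta$ and $tb$. Since $|u|,|v|>0$, the inequality $u\cdot v\le 0$ holds if and only if $\cos(\angle atb)\le 0$, which (for an angle in $[0,\pi]$) is equivalent to $\angle atb\ge\pi/2$. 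This chain of equivalences gives both directions of the lemma at once.

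The only delicate point is the degenerate case $t=a$ or $t=b$. Here $t$ lies on the boundary of $\Sph(a,b)$ (since $d(t,c)=d(a,b)/2$), so the ``$\in$'' side of the equivalence is satisfied; the angle $\angle atb$ is not defined by two genuine rays, so I would explicitly adopt the convention that the statement is interpreted vacuously (or that the angle is taken to be $\pi/2$) in this case, matching the fact that the subsequent algorithm never needs to evaluate the test when $q$ coincides with a data point. Apart from this small convention, the proof is a direct computation and I do not expect any real obstacle; the substantive content is just recognizing Thales' theorem and its converse inside the algebraic identity $|u+v|^{2}-|u-v|^{2}=4\,u\cdot v$.
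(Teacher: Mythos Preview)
Your proof is correct and takes a genuinely different route from the paper. The paper argues synthetically: it invokes the Inscribed Angle Theorem (Thales) for the boundary case and then handles the interior and exterior separately by contradiction, extending the segment $\overline{at}$ to hit the circle and summing the angles of the resulting triangle to force a contradiction. Your argument is purely algebraic: you rewrite the membership condition $d(t,c)\le d(a,b)/2$ via the vectors $u=a-t$, $v=b-t$, and reduce everything to the polarization identity $|u+v|^{2}-|u-v|^{2}=4\,u\cdot v$, so that the single inequality $u\cdot v\le 0$ simultaneously captures the interior, boundary, and exterior cases with no case split and no appeal to classical circle theorems.

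What each approach buys: the paper's proof is pictorial and ties the lemma explicitly to Thales' theorem, which may make the geometric intuition behind the subsequent angular-sweep algorithm more transparent; however, it needs two separate contradiction arguments and an auxiliary construction. Your approach is shorter, handles both directions at once, and generalizes verbatim to $\mathbb{R}^{d}$ (the identity and the dot-product characterization of the angle are dimension-independent), which is relevant given that the paper emphasizes that spherical depth scales well with dimension. Your explicit flagging of the degenerate case $t\in\{a,b\}$ is appropriate; the paper's proof implicitly assumes $t$ is distinct from $a$ and $b$ as well, so you are not glossing over anything the original handles.
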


\begin{proof}
If $t$ is on the boundary of $\Sph(a,b)$, the \emph{Inscribed Angle Theorem} (Theorem $2.2$ in \cite{libeskind2008euclidean}) suffices as the proof in both directions. For the rest of the proof, by $t \in \Sph(a,b)$, we mean  $t\in int\: \Sph(a,b)$.    
\\\\$\Rightarrow$) For $t \in \Sph(a,b)$, suppose that $\angle atb < \frac{\pi}{2}$ (proof by contradiction). We continue the line segment $\overline{at}$ to cross the boundary of the $\Sph(a,b)$. Let $t'$ be the crossing point (see Figure \ref{fig:point-circle-pi-in}). Since $\angle atb < \frac{\pi}{2}$, then, $\angle btt'$  is greater than $\frac{\pi}{2}$. Let $\angle btt'=\frac{\pi}{2}+\epsilon_1; \epsilon_{1}>0 $. From the Inscribed Angle Theorem, we know that $\angle at'b$ is a right angle. The angle $tbt'= \epsilon_{2}>0$ because $t\in \Sph(a,b)$. Summing up the angles in $\bigtriangleup tt'b$, as computed in (\ref{eq:point-circle-pi-in}), leads to a contradiction. So, this direction of proof is complete.

\begin{equation}
\angle tt'b + \angle t'bt + \angle btt'\geq \frac{\pi}{2}+\epsilon_{2}+ (\frac{\pi}{2}+\epsilon_{1}) = \pi +\epsilon_{1}+\epsilon_{2}> \pi
\label{eq:point-circle-pi-in} 
\end{equation}
\\\\$\Leftarrow$) If $\angle atb = \frac{\pi}{2}+\epsilon_{1}; \epsilon_{1}>0$, we prove that $t \in \Sph(a,b)$. Suppose that $t \notin \Sph(a,b)$ (proof by contradiction). Since $t \notin \Sph(a,b)$, at least one of the line segments $\overline{at}$ and $\overline{bt}$ crosses the boundary of $\Sph(a,b)$. Without loss of generality, assume that $\overline{at}$ is the one that crosses the boundary of $\Sph(a,b)$ at the point $t'$ (see Figure \ref{fig:point-circle-pi-out}). Considering the Inscribed Angle Theorem, we know that $\angle at'b=\frac{\pi}{2}$ and consequently, $\angle bt't=\frac{\pi}{2}$. The angle $\angle t'bt=\epsilon_{2}>0$ because $t \notin \Sph(a,b)$. If we sum up the angles in the triangle $\bigtriangleup tt'b$, the same contradiction as in (\ref{eq:point-circle-pi-in}) will be implied.
\end{proof}

\begin{figure}
\centering
\parbox{5cm}{
\includegraphics[width=5cm]{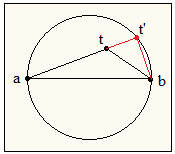}
\caption{$t$ inside $\Sph(a,b)$}
\label{fig:point-circle-pi-in}}
\qquad
\begin{minipage}{5cm}
\includegraphics[width=5cm]{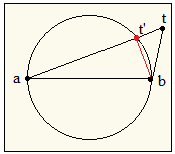}
\caption{$t$ outside $\Sph(a,b)$}
\label{fig:point-circle-pi-out}
\end{minipage}
\end{figure}

\paragraph{Algorithm:} Using Lemma \ref{lm:point-circle}, we present an algorithm to compute the spherical depth of a query point $q\in \mathbb{R}^2$ with respect to $S=\{x_1, x_2, ..., x_n\} \subseteq \mathbb{R}^2$.  This algorithm is summarized in the following steps.

\begin{itemize}
\item \textbf{Translating the points:}
Suppose that $T$ is a translation by $(-q)$. We apply $T$ to translate $q$ and all data points into their new coordinates. Obviously, $T(q)=0$.

 \begin{figure}[!ht]
  \centering
    \includegraphics[width=0.6\textwidth]{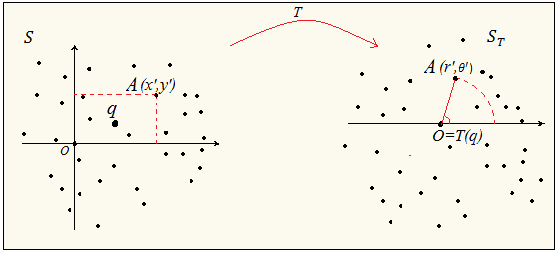}
  \caption{Transferred and sorted data set $S_T$ obtained from $S$}
  \label{fig:transferred-data-set}
\end{figure}

\item \textbf{Sorting the translated data points:} In this step we sort the translated data points based on their angles in their polar coordinates. After doing this step, we have $S_T$ which is a sorted array of the translated data points. See Figure \ref{fig:transferred-data-set}.

\item \textbf{Calculating the spherical depth:} Suppose that $x_i(r_i,\theta _i)$ is the $i^{th}$ element in $S_T$. For $x_i$, we define the arrays $O_i$ and $N_i$ as follows: 
\[ 
O_i=\left\{j\mid x_j\in S_T \: , \frac{\pi}{2} \leq |\theta_i -\theta_j|\leq \frac{3\pi}{2}\right\}
\]
\[ N_i= \{ 1,2,...,n\} \setminus O_i.\]
Thus the spherical depth of the origin of the coordinate system with respect to the data set $S_T$, which is equivalent to the spherical depth of $q$ with respect to $S$, can be computed by:
\[\SphD(q;S)=\SphD(T(q);S_T)=\SphD(0;S_T)= \frac{1}{2}\sum_{1\leq i\leq n}|O_i|,\]
where $|O_i|$ is the length of $O_i$. To present an formula for computing the $|O_i|$, we define $f_i$ and $l_i$ as follows:

\[f_i=
\begin{cases}
\min N_i -1 &\text{if $\frac{\pi}{2}< \theta_i \leq \frac{3\pi}{2}$}\\
    \min O_i & \text{otherwise}
\end{cases}
\]

\[l_i=
\begin{cases}
\max N_i +1 &\text{if $\frac{\pi}{2}< \theta_i \leq \frac{3\pi}{2}$}\\
    \max O_i & \text{otherwise.}
\end{cases}
\]

Figure \ref{fig:S-T-sorted-List} illustrates $O_i$, $N_i$,  $f_i$, and $l_i$ in two different cases. Considering the definitions of $f_i$ and $l_i$,
\[|O_i|=
\begin{cases}
f_i+(n-l_i+1) &\text{if $\frac{\pi}{2}< \theta_i \leq \frac{3\pi}{2}$}\\
    l_i-f_i+1 & \text{otherwise.}
\end{cases}
\]
\end{itemize}

\begin{figure}[!ht]
\centering
\parbox{6cm}{
\includegraphics[width=6.5cm]{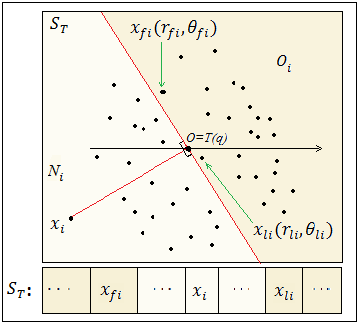}
\caption{Two representations of $S_T$ for $\theta\in (\frac{\pi}{2}, \frac{3\pi}{2}].$}
\label{fig:S-T-sorted-List}}
\qquad
\begin{minipage}{6cm}
\includegraphics[width=6.5cm]{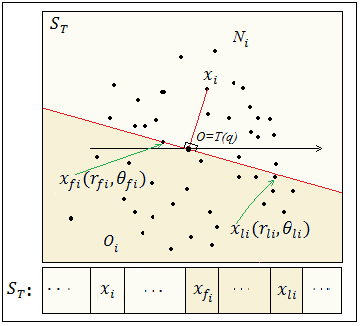}
\caption{Two representations of $S_T$ for $\theta\notin (\frac{\pi}{2}, \frac{3\pi}{2}].$}
\label{fig:S-T-sorted-List_2}
\end{minipage}
\end{figure}

\paragraph{Time complexity:} To analyse the time complexity of the algorithm, we need to compute the time complexities of the procedures in the algorithm (see the pseudocode of the algorithm in Algorithm \ref{Alg:sph- pseudocode}). The first procedure takes $O(n)$ to translate $q$ and all data points into the new coordinate system. The second procedure takes $O(n\log n)$ time. In this procedure, the loop iterates $n$ times, and the sorting algorithm takes $O(n\log n)$. Due to using the binary search algorithm for every $O_i$, the running time of the last procedure is also $O(n\log n)$. The rest of the algorithm contributes some constant time. In total, the running time of the algorithm is $O(n\log n)$.

\begin{algorithm}
\renewcommand{\algorithmicrequire}{\textbf{Input:}}
\renewcommand{\algorithmicensure}{\textbf{Output:}}
\newcommand{\Break}{\State \textbf{break}}
\caption{Computing the spherical depth of points in the plane}\label{Alg:sph- pseudocode}
\begin{algorithmic}[1]
\Require A data set $S$ and a query point $q$
\Ensure Spherical Depth of $q$ with respect to $S$ ($\SphD(q;S)$)
\item[]
\Procedure{Translating points}{}
\\input: $S$
\\output: Translated data set ($S_T$)
\For{each $x_i \in S$ }
\State $x_i \gets (x_i-q)$ 
\EndFor
\State \Return $S_T$
\EndProcedure
\item[]
\Procedure {Sorting translated data points around $T(q)$}{}
\\input: $S_T$ and $T(q)$
\\output: Sorted array $S_{TP}$
\For{each $x_i \in S_{T}$}
\State $x_i \gets$ Polar coordinate of $x_i$ \; \textit{//} $x_i(\theta_i,r_i)$ is obtained here.
\EndFor
\State Using an $O(n\log n)$ sorting algorithm, sort $x_i$ based on $\theta_i$
\State \Return $S_{TP}$
\EndProcedure
\item[]  
\Procedure {Depth calculation}{}
\\input: $S_{TP}$
\\output: Depth value of $\SphD(q;S)$
\For{each $x_{i}\in S_{TP}$}
\State $O_i \gets \left\{j\mid x_j\in S_{TP} \: , \frac{\pi}{2} \leq |\theta_i -\theta_j|\leq \frac{3\pi}{2}\right\}$
\State $N_i \gets \{1,2, \dots ,n \} \setminus O_i$
\State Using two \emph{binary search calls}, find the elements $f_i$ and $l_i$ in $S_{TP}$.
 
\State $f_i = 
\begin{cases}
\min N_i -1 &\text{if $\frac{\pi}{2}< \theta_i \leq \frac{3\pi}{2}$}\\
    \min O_i & \text{otherwise}
\end{cases}
$ 
\item[] 
\State $l_i = 
\begin{cases}
\max N_i +1 &\text{if $\frac{\pi}{2}< \theta_i \leq \frac{3\pi}{2}$}\\
    \max O_i & \text{otherwise.}
\end{cases}
$
 
\item[] 
\State Compute $|O_i|=
\begin{cases}
f_i+(n-l_i+1) ; &\text{if $\frac{\pi}{2}< \theta_i \leq \frac{3\pi}{2}$}\\
    l_i-f_i+1; & \text{otherwise.}
\end{cases}
$
\EndFor
\State $\SphD(q;S)\gets \frac{1}{2} \sum_{1\leq i \leq n}|O_i|$ 
\State \Return $\SphD(q;S)$
\EndProcedure 
\item[]
\State \textbf{End}; 
\end{algorithmic}
\end{algorithm}

\section{Lower Bound for Computing the Spherical Depth of a Point in the Plane}
We reduce the problem of Element Uniqueness to the problem of computing the spherical depth. It is known that the question of Element Uniqueness has a lower bound of $\Omega (n\log n)$ in the algebraic decision tree model of computation \cite{ben1983lower}.

\begin{theorem}
Computing the spherical depth of a query point in the plane takes $\Omega (n\log n)$ time.
\label{tm:lowebound}
\end{theorem}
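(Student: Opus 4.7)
The plan is to reduce Element Uniqueness to computing the bivariate spherical depth, inheriting Ben-Or's $\Omega(n\log n)$ lower bound~\cite{ben1983lower} in the algebraic decision tree model. Given an instance $A=\{a_1,\dots,a_n\}$, I would first shift the input in $O(n)$ time so that $a_i>0$ for every $i$ (which preserves the answer to Element Uniqueness), and then form a planar set $S$ of $2n$ points by attaching to each $a_i$ the two points $x_i^{+}=(1,a_i)$ and $x_i^{-}=(-a_i,1)$. With $q=(0,0)$, the whole construction is $O(n)$, so an algorithm that computes $\SphD(q;S)$ in time $T(n)$ would yield an $O(n)+T(2n)$-time algorithm for Element Uniqueness, provided $\SphD(q;S)$ actually encodes the answer.

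By Lemma~\ref{lm:point-circle}, $q\in\Sph(u,v)$ iff $(u-q)\cdot(v-q)\le 0$, and the relevant dot products here are elementary. For same-type pairs $(x_i^{+},x_j^{+})$ or $(x_i^{-},x_j^{-})$, the dot product equals $1+a_ia_j>0$, so none of these $2\binom{n}{2}$ pairs contributes. For a self-cross $(x_i^{+},x_i^{-})$, the dot product is $a_i-a_i=0$, placing $q$ exactly on the boundary and contributing $1$ for each of the $n$ indices. For a cross pair $(x_i^{+},x_j^{-})$ with $i\ne j$, the dot product is $a_i-a_j$, so $q\in\Sph(x_i^{+},x_j^{-})$ iff $a_i\le a_j$.

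Consequently, for each unordered pair $\{i,j\}$ with $i\ne j$ the two cross pairs $(x_i^{+},x_j^{-})$ and $(x_j^{+},x_i^{-})$ together contribute $1$ if $a_i\ne a_j$ and $2$ if $a_i=a_j$. Adding the $n$ self-cross contributions, the total count of pairs satisfying the angle condition equals $n+\binom{n}{2}+D$, where $D$ is the number of duplicate unordered pairs in $A$. Recovering this count as $\SphD(q;S)\cdot\binom{2n}{2}$ and comparing it with $n+\binom{n}{2}$ thus decides Element Uniqueness in $O(1)$ extra time, and the lower bound follows. The delicate point to justify is the handling of the boundary: because the sphere area is a closed ball, a duplicate $a_i=a_j$ yields dot product zero and is still counted, which is precisely what makes the count strictly larger than $n+\binom{n}{2}$ whenever $A$ contains a duplicate.
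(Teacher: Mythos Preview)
Your reduction is correct and follows the same template as the paper's proof: reduce Element Uniqueness to computing the spherical depth at the origin by mapping each $a_i$ to a small family of points whose pairwise angles at $q$ encode order relations among the $a_i$, and exploit the closedness of the sphere area so that duplicates land exactly on the boundary and are still counted. The paper maps each $a_i$ to four points (in Cartesian coordinates, the quarter-turn orbit of $(a_i,1)$, giving $|S|=4n$) and analyzes the sets $O_j$ to obtain the count $4n^2+2n$ in the distinct case; you map each $a_i$ to just two points (one quarter-turn pair), and your dot-product reformulation $q\in\Sph(u,v)\Longleftrightarrow u\cdot v\le 0$ collapses the case analysis to a one-line computation with threshold $n+\binom{n}{2}$. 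Both constructions buy the same $\Omega(n\log n)$ bound; yours is simply a more economical instantiation of the same idea, with a cleaner bookkeeping of the boundary case that detects duplicates.
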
 

\begin{proof}
We show that finding the spherical depth allows us to answer the question of Element Uniqueness. Suppose that $A=\{a_1, a_2, ..., a_n\}$, for $n\geq 2$ is a given set of real numbers.\footnote{\textbf{Note:} We suppose all of the $a_i$s to be only positive(negative), otherwise we partition $A$ into two sets $A_1$ (consisting of only positive numbers) and $A_2$ (consisting of only negative numbers), and prove the theorem for $A_1$ and $A_2$, separately. Note that $A=A_1\cup A_2$ and $A_1 \cap A_2 = \emptyset$.} 
For every $a_i \in A$ we construct four points $x_i$, $x_{n+i}$, $x_{2n+i}$, and $x_{3n+i}$ in the polar coordinate system as follows:
\[ x_i=\left(r_i,\theta_i\right), \:
x_{n+i}=\left(r_i,\theta_i+\frac{\pi}{2}\right), \:
x_{2n+i}=\left(r_i,\theta_i+\pi\right), \: \text{and} \:
x_{3n+i}=\left(r_i,\theta_i+\frac{3\pi}{2}\right),\]
where $r_i= \sqrt{1+{a_i^2}}$ and $\theta_i=\tan^{-1}(1/a_i)$. Thus we have a set $S$ of $4n$ points $x_i$, $x_{n+i}$, $x_{2n+i}$, and $x_{3n+i}$, for $1\leq i \leq n$. See Figure \ref{fig:lower-bound-S}.
\\\\We select the query point $q=(0,0)$, and define $O_j$ as follows:
\begin{equation}
\label{eq:O-for-lowebound}
O_j=\left\{x_k\in S\mid \angle x_jqx_k \geq \frac{\pi}{2} \right\}, \: 1\leq j\leq 4n.
\end{equation}
We compute $\SphD(q;S)$ in order to answer the Element Uniqueness problem. Suppose that $x_j\in S$, for $1\leq j \leq 4n$, is a unique element. In this case, $|O_j|=2n+1$ because, from (\ref{eq:O-for-lowebound}), it can be figured out that the expanded $O_j$ is as follows:

\[
O_j=
\begin{cases}
\{x_{n+1},...,x_{n+j}, x_{2n+1},..., x_{3n}, x_{3n+j},...,x_{4n}\} ; &\text{if $j\in \{1,...,n\}$}\\
\{x_{2n+1},...,x_{n+j}, x_{3n+1},..., x_{4n}, x_{j-n},...,x_{n}\} ; &\text{if $j\in \{n+1,...,2n\}$}\\
\{x_{3n+1},...,x_{j+n}, x_{1},..., x_{n}, x_{j-n},...,x_{2n}\} ; &\text{if $j\in \{2n+1,...,3n\}$}\\
\{x_{1},...,x_{j-3n}, x_{n+1},..., x_{2n}, x_{j-n},...,x_{3n}\} ; &\text{otherwise.}
\end{cases}
\]

\begin{figure}[!ht]
  \centering
    \includegraphics[width=0.6\textwidth]{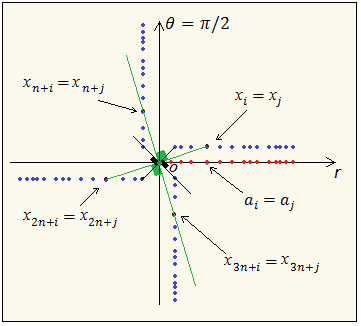}
  \caption{A representation of $A$ (determined by red points) $S$ (determined by blue points) and duplications in these sets}
  \label{fig:lower-bound-S}
\end{figure}

Referring to Lemma \ref{lm:point-circle} and Algorithm \ref{Alg:sph- pseudocode},
\[
\SphD(q;S)= \frac{1}{2}\sum_{1\leq j\leq 4n}|O_j|=\frac{1}{2}\sum_{1\leq j\leq 4n}(2n+1)=\frac{1}{2}(4n)(2n+1)=4n^2+2n\]
Now suppose that there exist some $i\neq j$ with $x_i= x_j$ in $S$. In this case, from (\ref{eq:O-for-lowebound}), it can be seen that 
\begin{gather*}
 |O_i|=|O_j|=|O_{(n+i)\bmod 4n}|=|O_{(n+j)\bmod 4n}|=|O_{(2n+i)\bmod 4n}|=\\ |O_{(2n+j)\bmod 4n}|=|O_{(3n+i)\bmod 4n}|=|O_{(3n+j)\bmod 4n}|=2n+2.  
\end{gather*}
See Figure \ref{fig:lower-bound-S}. Lemma \ref{lm:point-circle} and Algorithm \ref{Alg:sph- pseudocode} imply that
\begin{gather*}
\SphD(q;S)= \frac{1}{2}\sum_{1\leq j\leq 4n}|O_j|=\frac{1}{2}((4n-8)(2n+1) + 8(2n+2))= 4n^2+2n+4.
\end{gather*}
Thus if some element is duplicated in $S$, $\SphD(q;S)$ is strictly higher than if $S$ has no repetition in its elements. In other words, the question of Element Uniqueness can be answered by finding the spherical depth. Therefore the elements of $A$ are unique if and only if the spherical depth of $(0,0)$ with respect to $S$ is $4n^2+2n$. This implies that the computation of spherical depth require $\Omega (n\log n)$ time. It is necessary to mention that the only computations in the reduction are the construction of $S$ that take $O(n)$ time.   
\end{proof}

\section{Relationships Between Spherical Depth and Simplicial Depth}

\paragraph{Definition:}
For a point $q \in \mathbb{R}^2$ and a data set $S$ consisting of $n$ points in $\mathbb{R}^2$, we define $\Bin(q;S)$ to be the set of all closed sphere areas, out of $n \choose 2$ possible sphere areas, that contain $q$. We also define $\Sin(q;S)$ to be the set of all closed simplices, out of $n \choose 3$ possible closed simplices defined by $S$, that contain $q$.

\begin{lemma}
Suppose that $q$ is a point in a given convex hull $H$ obtained from a data set $S$ in $\mathbb{R}^2$. $q$ is covered by the union of sphere areas defined by $S$.
\label{lm:CH-q-GC}
\end{lemma}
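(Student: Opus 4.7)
The plan is to prove the contrapositive: if $q$ lies in no sphere area $\Sph(x_i,x_j)$, then $q \notin H$. I would first dispose of the degenerate case: if $q = x_i$ for some $i$, then $q$ lies on the boundary of $\Sph(x_i,x_j)$ for every $j \neq i$ (it is one endpoint of the defining diameter), so $q$ is already covered and the hypothesis of the contrapositive fails. We may therefore assume $q \neq x_i$ for all $i$, which guarantees that every angle $\angle x_i q x_j$ is well-defined. Applying Lemma \ref{lm:point-circle}, our assumption becomes $\angle x_i q x_j < \pi/2$ for every pair $i \neq j$.

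Next I would translate coordinates so that $q = 0$; the condition $\angle x_i q x_j < \pi/2$ is then equivalent to $\langle x_i, x_j\rangle > 0$ for every $i \neq j$. The key step, which I expect to be the main obstacle, is turning these quadratically many pairwise dot-product conditions into the single geometric statement that all data points lie in one common open half-plane through $q$. My trick is to use one of the data points itself as the reference direction: fix any index $k$ and set $v = x_k / \|x_k\|$. Then for every $j \neq k$ we have $\langle v, x_j\rangle = \langle x_k, x_j\rangle / \|x_k\| > 0$, and for $j = k$ we have $\langle v, x_k\rangle = \|x_k\| > 0$. Hence every $x_j$ satisfies $\langle v, x_j\rangle > 0$.

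This places all of $S$ inside the open half-plane $H_v = \{y \in \mathbb{R}^2 : \langle v, y\rangle > 0\}$, which is convex; therefore $H = \Conv(S) \subseteq H_v$. Since $\langle v, q\rangle = \langle v, 0\rangle = 0$, we obtain $q \notin H_v$ and thus $q \notin H$, contradicting the hypothesis. This finishes the contrapositive and shows that $q$ must lie in at least one sphere area $\Sph(x_i, x_j)$, so $q$ is covered by the union of all sphere areas defined by $S$.
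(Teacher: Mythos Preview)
Your argument is correct, but it proceeds along a genuinely different line from the paper's. The paper localizes to a single triangle: since $q\in H$, it lies in some triangle $\triangle abc$ with $a,b,c\in S$, and then the identity $\angle aqb+\angle aqc+\angle bqc=2\pi$ forces one of the three angles to be at least $2\pi/3>\pi/2$, so by Lemma~\ref{lm:point-circle} the corresponding sphere area already contains $q$. Your proof is global: from the contrapositive hypothesis you extract the inner-product condition $\langle x_i,x_j\rangle>0$ for all $i\neq j$, and your choice $v=x_k/\|x_k\|$ neatly converts these pairwise constraints into a single separating half-plane through $q$, whence $q\notin\Conv(S)$. The paper's route gives a slightly sharper conclusion (two of the three sphere areas of a containing triangle cover $q$, which feeds directly into Lemma~\ref{lm:triangle-ball}), while your approach is cleaner, avoids any appeal to triangulation, and generalizes verbatim to $\mathbb{R}^d$.
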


\begin{proof}
It can be seen that there is at least one triangle, defined by the vertices of $H$, that contains $q$. We prove that the union of the sphere areas defined by such triangle contains $q$. See Figure \ref{fig:CH-q-GC} and Figure \ref{Triangleabc}. We prove this statement by contradiction. Suppose that $q$ is covered by none of $\Sph(a,b)$, $\Sph(a,c)$, and $\Sph(b,c)$. Therefore, Lemma \ref{lm:point-circle} implies that none of the angles $\angle aqb$, $\angle aqc$, and $\angle bqc$ is greater than or equal to $\frac{\pi}{2}$ which is a contradiction because at least one of these angles should be at least $\frac{2\pi}{3}$ in order to get $2\pi$ as their sum.
\end{proof}

\begin{lemma} Suppose that $S=\{a,b,c\}$ is a set of points in $\mathbb{R}^2$. For every $q\in \mathbb{R}^2$, if $|\Sin(q;S)|=1$, then $|\Bin(q;S)|\geq 2$.
\label{lm:triangle-ball}
\end{lemma}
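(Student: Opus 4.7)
The plan is to translate the statement into an angle-sum condition via Lemma~\ref{lm:point-circle}, which characterizes $q \in \Sph(x,y)$ by $\angle xqy \geq \pi/2$. With $S = \{a,b,c\}$, the hypothesis $|\Sin(q;S)| = 1$ says $q$ lies in the closed triangle $\triangle abc$, so the goal reduces to showing that at least two of the three angles $\angle aqb$, $\angle bqc$, $\angle aqc$ meet the threshold $\pi/2$.

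First I would dispose of the degenerate boundary cases. If $q$ coincides with a vertex, say $q = a$, then $q$ lies on the boundary of both $\Sph(a,b)$ and $\Sph(a,c)$, and we are done. If $q$ lies in the relative interior of an edge, say $\overline{ab}$, then $\angle aqb = \pi \geq \pi/2$ gives one sphere area; the remaining two angles at $q$ satisfy $\angle aqc + \angle bqc = \pi$, so at least one of them is $\geq \pi/2$, yielding a second.

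For the main case where $q$ is in the open interior of $\triangle abc$, the three rays from $q$ to the vertices partition a full turn around $q$, so the angles $\alpha := \angle aqb$, $\beta := \angle bqc$, $\gamma := \angle aqc$ each lie in $(0, \pi)$ and satisfy $\alpha + \beta + \gamma = 2\pi$. I would argue by contradiction: if fewer than two of these were $\geq \pi/2$, then two of them, say $\alpha$ and $\beta$, would both be strictly less than $\pi/2$, forcing $\gamma = 2\pi - \alpha - \beta > \pi$, which contradicts $\gamma < \pi$. Hence at least two of the angles clear the threshold, and Lemma~\ref{lm:point-circle} delivers $|\Bin(q;S)| \geq 2$.

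I do not anticipate any substantial obstacle; the argument is essentially a pigeonhole on three positive angles summing to $2\pi$, in the same spirit as the proof of Lemma~\ref{lm:CH-q-GC}. The only care needed is in the clean boundary treatment above and in implicitly assuming $a,b,c$ are non-collinear, since otherwise the closed simplex $\triangle abc$ is degenerate and the angle identity around $q$ fails.
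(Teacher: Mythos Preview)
Your proposal is correct and follows essentially the same approach as the paper: both reduce the claim via Lemma~\ref{lm:point-circle} to showing that at least two of the three angles $\angle aqb$, $\angle bqc$, $\angle aqc$ are at least $\pi/2$, and both use that these angles sum to $2\pi$ with none exceeding $\pi$ when $q\in\triangle abc$. The paper's write-up first invokes Lemma~\ref{lm:CH-q-GC} to get $|\Bin(q;S)|\geq 1$ and then derives a second sphere area, whereas you argue the pigeonhole directly and treat the edge case more explicitly, but the underlying argument is the same.
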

Another form of Lemma \ref{lm:triangle-ball} is that if $q \in \bigtriangleup abc$, then $q$ falls inside at least two sphere areas out of three sphere areas $\Sph(a,b)$, $\Sph(c,b)$, and $\Sph(a,c)$. The equivalency between these two forms of the lemma is clear. We prove just the first one.

\begin{figure}[!ht]
\centering
\parbox{6cm}{
\includegraphics[width=6.5cm]{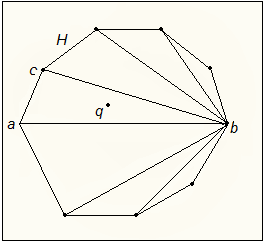}
\caption{Convex hull $H$ contains point $q$.}
\label{fig:CH-q-GC}}
\qquad
\begin{minipage}{6cm}
\includegraphics[width=6.5cm]{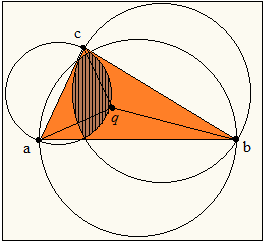}
\caption{Triangle $abc$ contains point $q$.}
\label{Triangleabc}
\end{minipage}
\end{figure}

\begin{proof}
We prove the lemma by contradiction. By Lemma \ref{lm:CH-q-GC}, $\Bin(q;S)\geq 1$. Suppose that $|\Bin(q;S)|=1$. If $q$ is located on the vertices of $\bigtriangleup abc$, it clear that $|\Bin(q;S)|\geq 2$ thus, we suppose that $q$ is not located on the vertices of $\bigtriangleup abc$. Without loss of generality, we suppose that $q$ falls inside the $\Sph(a,b)$. For the rest of the proof, we focus on the relationships among the angles $\angle aqb$, $\angle cqa $, and $\angle cqb$ (see Figure \ref{Triangleabc}). Since $q$ is inside $\bigtriangleup abc$, $\angle aqb \leq \pi$. Consequently, at least one of $\angle cqa$ and $\angle cqb$ is greater than or equal to $\frac{\pi}{2}$. So, Lemma \ref{lm:point-circle} implies that $q$ will fall inside at least one of the $\Sph(a,c)$ and $\Sph(b,c)$.  Hence, $|\Bin(q;S)|=1$ contradicts $|\Sin(q;S)|=1$. This means that the case $|\Bin(q;S)|\geq 2$. As an illustration, in Figure \ref{Triangleabc}, for the points inside the hatched area $|\Bin(q;S)|=3$.
\end{proof}

\begin{lemma} For $S=\{x_1,...,x_n\} \subset \mathbb{R}^2$, $\frac{|\Bin(q;S)|}{|\Sin(q;S)|}\geq \frac{2}{n-2}$.
\label{lm:bin-sin}
\end{lemma}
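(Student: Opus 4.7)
The plan is a double counting argument on pairs of the form (triangle containing $q$, side of that triangle whose sphere area contains $q$).

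First I would define the set
\[
P \;=\; \bigl\{(\{x_i,x_j,x_k\},\,\{x_a,x_b\}) \;:\; q\in\bigtriangleup x_ix_jx_k,\; \{x_a,x_b\}\subset\{x_i,x_j,x_k\},\; q\in\Sph(x_a,x_b)\bigr\}.
\]
Then I would count $|P|$ in two different ways.

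For the lower bound on $|P|$, fix a triangle $\bigtriangleup x_ix_jx_k \in \Sin(q;S)$. By Lemma \ref{lm:triangle-ball} applied to the three-point set $\{x_i,x_j,x_k\}$, at least two of the three sphere areas $\Sph(x_i,x_j)$, $\Sph(x_i,x_k)$, $\Sph(x_j,x_k)$ contain $q$. Hence each such triangle contributes at least $2$ pairs to $P$, giving $|P|\ge 2\,|\Sin(q;S)|$.

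For the upper bound on $|P|$, fix a sphere area $\Sph(x_a,x_b)\in\Bin(q;S)$. The number of triangles from $S$ that use the edge $\{x_a,x_b\}$ is exactly $n-2$ (one for each choice of the third vertex from $S\setminus\{x_a,x_b\}$), and only those triangles can contribute a pair of $P$ with second coordinate $\{x_a,x_b\}$. Therefore $|P|\le (n-2)\,|\Bin(q;S)|$. Combining the two bounds gives $2\,|\Sin(q;S)|\le (n-2)\,|\Bin(q;S)|$, which is the desired inequality (assuming $|\Sin(q;S)|>0$; the case $|\Sin(q;S)|=0$ makes the ratio inequality vacuous or trivially true under the usual convention).

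I do not expect a serious obstacle: the only subtle point is making sure Lemma \ref{lm:triangle-ball} is applicable even when $q$ lies on the boundary of the triangle or coincides with a vertex, but the lemma is stated for the closed simplex and closed sphere areas, so boundary cases are already covered. A minor bookkeeping concern is that in the upper-bound step I am counting every triangle through the edge $\{x_a,x_b\}$, not only those containing $q$; this over-counting only strengthens the direction of the inequality, so the bound still goes the right way.
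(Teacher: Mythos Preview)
Your proposal is correct and is essentially the same double-counting argument as the paper's proof: both count incidences between triangles in $\Sin(q;S)$ and their edges whose sphere areas lie in $\Bin(q;S)$, using Lemma~\ref{lm:triangle-ball} to get at least two such edges per triangle and the trivial bound of $n-2$ triangles per edge. Your formulation via the explicit pair set $P$ is somewhat cleaner than the paper's more informal phrasing, but the underlying idea is identical.
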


\begin{proof} We suppose that $\Sph(x_i,x_j)\in \Bin(q;S)$ (see Figure \ref{fig:bin-sin}). There exist at most $(n-2)$ triangles in $\Sin(q;S)$ such that $x_ix_j$ is an edge of them. Let us consider $ \bigtriangleup x_ix_jx_k$ from these triangles. Referring to Lemma \ref{lm:triangle-ball}, we know that $q$ falls inside at least one of $\Sph(x_i,x_k)$ and $\Sph(x_j,x_k)$. It means that there exist at most $(n-2)$ triangles in $\Sin(q;S)$ such that $x_ix_k$ (respectively $x_jx_k$) is an edge of them. As can be seen, the triangle $\bigtriangleup x_ix_jx_k$ is counted at least two times, one time for $\Sph(x_i,x_j)$ and one time for $\Sph(x_i,x_k)$ (or $\Sph(x_j,x_k)$ ). So, we can say that for every sphere area from $\Bin(q;S)$, such as $\Sph(x_i,x_j)$ there exist at most $\frac{(n-2)}{2}$ distinct triangles, triangles with only one common side, in $\Sin(q;S)$. Consequently, (\ref{eq:bin-sin}) can be obtained. 
 
\begin{equation}
\frac{(n-2)}{2}|\Bin(q;S)|\geq |\Sin(q;S)| \Rightarrow \frac{|\Bin(q;S)|}{|\Sin(q;S)|}\geq \frac{2}{(n-2)}
\label{eq:bin-sin}
\end{equation}
\end{proof}

\begin{figure}[h!]
  \centering
  \includegraphics[width=0.4\textwidth]{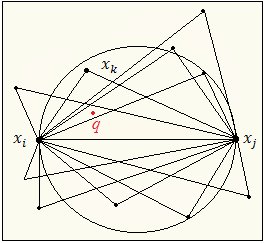}
  \caption{Sphere area $\Sph(x_i,x_j)$ contains point $q$}
  \label{fig:bin-sin}
\end{figure}

\begin{theorem}
For $q \in \mathbb{R}^2$ and a given data set $S$ which consists of $n$ points in $\mathbb{R}^2$, $\SphD(q;S)\geq \frac{2}{3} \SD(q;S)$.
\label{thrm:Sph-Simp}
\end{theorem}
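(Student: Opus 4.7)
The plan is to reduce the theorem to a book-keeping calculation using Lemma \ref{lm:bin-sin}, which already contains all the geometric content. The only remaining work is to convert the raw counts $|\Bin(q;S)|$ and $|\Sin(q;S)|$ into the normalised depth values $\SphD(q;S)$ and $\SD(q;S)$, using the elementary identity between the binomial coefficients $\binom{n}{2}$ and $\binom{n}{3}$.

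First I would recall that, by definition, $\SphD(q;S) = |\Bin(q;S)|/\binom{n}{2}$ and $\SD(q;S) = |\Sin(q;S)|/\binom{n}{3}$. Next I would apply Lemma \ref{lm:bin-sin} to obtain the combinatorial inequality
\[
|\Bin(q;S)| \;\geq\; \frac{2}{n-2}\,|\Sin(q;S)|.
\]
Dividing both sides by $\binom{n}{2}$ yields
\[
\SphD(q;S) \;\geq\; \frac{2\,|\Sin(q;S)|}{(n-2)\binom{n}{2}}.
\]
Now I would invoke the identity $(n-2)\binom{n}{2} = 3\binom{n}{3}$ to rewrite the right-hand side as $\frac{2}{3}\cdot\frac{|\Sin(q;S)|}{\binom{n}{3}} = \frac{2}{3}\SD(q;S)$, which gives the claim. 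The cases $n<3$ can be handled separately (the simplicial depth is then vacuously zero, and the inequality holds trivially), so one only needs to state the argument for $n \geq 3$.

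The main obstacle is essentially nonexistent at this stage: all of the geometry has already been absorbed into Lemma \ref{lm:point-circle} (the right-angle characterisation of sphere containment), Lemma \ref{lm:triangle-ball} (every triangle containing $q$ contributes at least two enclosing spheres), and Lemma \ref{lm:bin-sin} (the per-sphere double counting of triangles). If anything deserves careful prose in the written proof, it is the brief normalisation step showing $(n-2)\binom{n}{2}=3\binom{n}{3}$ and the remark that the bound is tight in the sense predicted by Lemma \ref{lm:triangle-ball} when every triangle containing $q$ contributes exactly two spheres containing $q$.
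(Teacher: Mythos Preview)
Your proposal is correct and follows essentially the same route as the paper: both arguments invoke Lemma~\ref{lm:bin-sin} for the combinatorial inequality and then convert counts to normalised depths via the identity $(n-2)\binom{n}{2}=3\binom{n}{3}$ (the paper writes this as $\binom{n}{3}/\binom{n}{2}=(n-2)/3$). Your explicit handling of the degenerate case $n<3$ is a small addition the paper omits.
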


\begin{proof}
From the definitions of spherical depth and simplicial depth, we can calculate the ratio of $\frac{\SphD(q;S)}{\SD(q;S)}$ as follows: 
\begin{equation}
\label{first-ratio1}
\frac{\SphD(q;S)}{\SD(q;S)}= \frac{\frac{|\Bin(q;S)|}{{n \choose 2}}}{\frac{|\Sin(q;S)|}{{n \choose 3}}}= \frac{|\Bin(q;S)|}{|\Sin(q;S)|} \times \frac{(n-2)}{3}
\end{equation}
From (\ref{first-ratio1}) and Lemma \ref{lm:bin-sin}, it can be seen that 
\[\frac{\SphD(q;S)}{\SD(q;S)} \geq \frac{2}{3} \Rightarrow \SphD(q;S) \geq \frac{2}{3} \SD(q;S).\]
\end{proof}

\section{Experiments}
To support Theorem \ref{thrm:Sph-Simp}, we compute the spherical depth and the simplicial depth of the points in three random sets $Q_1$, $Q_2$, and $Q_3$ with respect to data sets $S_1$, $S_2$, and $S_3$, respectively. The elements of $Q_i$ and $S_i$ are some randomly generated points (double precision floating point) within the square $A=\{(x,y)| x,y \in [-10,10]\}$. The results of our experiments are summarized in Table \ref{table:results-random-points}. Every cell in the table represents the corresponding depth of $q_i$ with respect to data set $S_i$, where $q_i \in Q_i$. The cardinalities of $Q_i$s and $S_i$s are as follows: $|Q_1|=100$, $|S_1|=750$, $|Q_2|=750$, $|S_2|=2500$, $|Q_3|=2500$, $|S_3|=10000$.
\\\\As can be seen in Table \ref{table:results-random-points}, the experimental results are consistent with Theorem \ref{thrm:Sph-Simp}. In fact, the experimental results suggest a bound that is stronger than the obtained bound in Theorem \ref{thrm:Sph-Simp}. This difference between the experimental bound and the theoretical bound is a motivation to do more research in this area.

\begin{table}[!ht]
\begin{center}
\begin{tabular}{|l||l|l||l|l||l|l|}
\hline
 &\multicolumn{2}{l|}{$(q_1;S_1)$}&\multicolumn{2}{l|}{$(q_2;S_2)$}&\multicolumn{2}{l|}{$(q_3;S_3)$}\\
\cline{2-7}
 &Min& Max&Min&Max&Min&Max\\
\hline\hline
$\SD$&0.00&0.25&0&0.25&0.00&0.24\\
\hline
$\SphD$&0.01&0.50&0.00&0.50&0.00&0.50\\
\hline
$\frac{\SphD}{\SD}$&2.00&$\infty$&2.00&$\infty$&2.02&$\infty$\\
\hline
\end{tabular}
\end{center}
\caption{Minimum and Maximum of \\simplicial depth,  spherical depth, and\\ the ratio of these data depths.}
\label{table:results-random-points}
\end{table}

\section{Conclusion}
In this paper, we developed an optimal $\Theta(n\log n)$ algorithm to compute the spherical depth of a bivariate query point with respect to a given data set in $\mathbb{R}^2$. To obtain a lower bound for the algorithm, the Element Uniqueness problem, which requires $\Omega (n\log n)$ time, is reduced to the computing of spherical depth. In addition to the time complexity, the main advantage of this algorithm is it simplicity for implementation. We also investigated some geometric properties which lead us to find a theoretical relationship (i.e. $\SphD\geq \frac{2}{3} SD$) between the spherical depth and the simplicial depth. Finally, some experimental results which suggest a stronger bound (i.e. $\SphD\geq 2SD$) are provided. More research on this topic is needed to figure out if the real bound is closer to the experimental bound or to the current theoretical bound.

\end{document}